\title{A Critique of Sopin's ``$\ph = \pspace$''\thanks{Supported in part by NSF grant 
		CCF-2006496.}}
\author{Michael C. Chavrimootoo} 
\author{Ian Clingerman}
\author{Quan Luu}
\affil{Department of Computer Science\\University of Rochester\\Rochester, NY 14627, USA}
\newcommand{\condition}{\,\mid \:}
\newcommand{\naturalnumber}{\ensuremath{{\mathbb{N}}}}
\newcommand{\naturalnumberpositive}{\ensuremath{{\mathbb{N}^+}}}
\newcommand{\p}{\ensuremath{{\rm P}}}
\newcommand{\np}{\ensuremath{{\rm NP}}}
\newcommand{\ph}{\ensuremath{{\rm PH}}}
\newcommand{\pspace}{\ensuremath{{\rm PSPACE}}}
\newcommand{\dspace}{\ensuremath{{\rm DSPACE}}}
\newcommand{\lspace}{\ensuremath{{\rm L}}}
\newcommand{\sat}{\ensuremath{{\rm SAT}}}
\newcommand{\qbf}{\ensuremath{{\rm QBF}}}
\newtheorem{theorem}{Theorem}
\newtheorem{proposition}[theorem]{Proposition}
\date{December 9, 2022}
\begin{document}\sloppy

\maketitle

\begin{abstract}
We critique Valerii Sopin's paper ``$\ph=\pspace$''~\cite{sop:t:phpspace}.
The paper claims to resolve one of the major open problems of theoretical computer science by leveraging the Skolemization of existential quantifiers of quantified boolean formulas
to show that $\qbf$ (a well-known $\pspace$-complete problem) is
in $\Pi_4^p$, and thus $\ph=\pspace$.
In this critique, we highlight problems in that paper
and conclude that it fails to establish that $\ph=\pspace$.
\end{abstract}

\section{Introduction}
In this paper, we critique Valerii Sopin’s paper ``$\ph=\pspace$''~\cite{sop:t:phpspace}. The paper introduces two theorems. The first theorem is about Skolemization of quantified boolean formulas, while the second theorem attempts to leverage the first theorem to collapse the polynomial hierarchy and prove that $\ph = \pspace$. Before going into more detail, we first discuss the importance of such a result.

In computational complexity theory, the polynomial hierarchy is defined as 
$\ph=\p \cup \np \cup \np^{\np} \cup \np^{\np^{\np}} \cup \cdots$, and $\pspace$ refers to the class of problems that can be solved using at most polynomial space. Both are fundamentally important and intensely studied classes. It is known that $\ph \subseteq \pspace$. However, whether $\ph$ equals $\pspace$ to this day remains one of the central open problems in theoretical computer science. Informally, this is because most known techniques cannot settle
this question.\footnote{
This is because a proof resolving $\ph$ vs.\ $\pspace$ cannot relativize, as follows from the results of Baker, Gill, and Solovay~\cite{bak-gil-sol:j:rel} and Yao~\cite{yao:c:separating}. 
}
There are many consequences to proving $\ph=\pspace$~\cite{far:url-rm-font:ph-equals-pspace-consequences}.
Hypothetically, if one were able to show that $\pspace$ is no more complex than some level $k$ of $\ph$, it would indicate that all levels above $k$ collapse to the $k$th level (since they are all contained in $\pspace$). Even for a large value of $k$, a collapse would help advance research into other open questions of complexity theory. For example, $\ph = \pspace$ would yield the existence of $\ph$-complete problems, a currently unresolved issue (that is in fact equivalent to the issue of whether \ph\ collapses, i.e., for some $i$, $\ph = \Sigma_{i}^{p} $). Another serious implication of this result would be resolving the relationship between $\lspace$ and classes like $\ph$ or $\np.$\footnote{$\lspace$ is the class of decision problems solvable by a deterministic Turing machine in logarithmic space. It is known that $\lspace \subseteq \p$. Thus if $\lspace = \np$, then $\lspace = \p = \np = \ph$. It is also true by the space hierarchy theorem that $\lspace \neq \pspace$. Consequently, if $\ph = \pspace$, then $\lspace \neq \ph$, and thus $\lspace \neq \np$.}

The approach used in Sopin's paper to purportedly prove its theorems is as follows: first apply Skolemization to an arbitrary quantified boolean formula and then construct an algorithm to solve $\qbf$ from one that solves $\Pi_4\sat$. However, we will show how the proofs for these theorems are flawed.

In Section~\ref{s:prelims}, we introduce the definitions and notations used within Sopin's proofs. In Sections~\ref{s:theorem-one} and~\ref{s:theorem-two}, we summarize the key points of Theorems~1 and~2 and analyze each  proof. We will use Section~\ref{s:skolem} to point out a key observation on the use of Skolemization in the context of the paper. Finally, in Section~\ref{s:conclusion}, we conclude our critique.

\section{Preliminaries}\label{s:prelims}

We present here many standard concepts in complexity theory. Equivalent definitions can be found in most modern 
textbooks on complexity~\cite{hem-ogi:b:companion,aro-bar:b:complexity,sip:b:introduction-third-edition}.

As to notation,
given a string $w$, let $|w|$ denote the length of $w$.
Additionally, let 
$\naturalnumber = \{0, 1, 2, \ldots\}$ and
let $\naturalnumberpositive = \{1, 2, 3, \ldots\}$.

Throughout this paper, we will speak of boolean formulas and boolean variables. We will assume that 
our (boolean) variables are always assigned~1 (true) or~0 (false). 
A boolean formula $\phi$ with boolean variables $x_1, \ldots, x_n$ is denoted by $\phi (x_1, \ldots, x_n)$.
Quantified boolean formulas are of the form $(Q_1 x_1)\cdots(Q_n x_n)[\phi (x_1,\ldots, x_n)]$, such that for each $i \in \{1, \ldots, n\}$,
it holds that $Q_i$ is either $\exists$ or $\forall$.

Let us now define the polynomial hierarchy. 
Fix $i \geq 1$. A language $L$ is in $\Sigma_i^p$ exactly if there is a polynomial $q$ and a polynomial-time computable predicate $R$ such that for all $x$ it holds that
$$x\in L \iff (\exists w_1: |w_1| \leq q(|x|))(\forall w_2: |w_2| \leq q(|x|)\cdots
(Q_iw_i: |w_i| \leq q(|x|))[R(x, w_1, \ldots, w_i)],$$
where $Q_i$ is $\exists$ if $i$ is odd and $\forall$ if $i$ is even.
Additionally, a language $L'$ is in $\Pi_i^p$ exactly if $\overline{L'} \in \Sigma_i^p$.
The polynomial hierarchy is defined as the class $\ph = \bigcup_{i\in\naturalnumberpositive}\Sigma_i^p$.\footnote{Readers familiar with the concepts will notice that we omitted explicit mention of $\Sigma_0^p$, but it is of course contained in $\Sigma_1^p$ and thus not actually omitted. This ``omission'' occurs simply because we do not need to refer to that class in the rest of this paper, and so it saves us the trouble of defining it.} This definition yields the same classes as the other common definition, drawn on in our introduction, i.e., $\Sigma_3^p = \np^{\np^\np}$. 

For each of the abovementioned classes (except $\ph$), we can define the following
canonical complete problems.
The definition that follows is adapted from one given by Arora and Barak~\cite{aro-bar:b:complexity}.
For each $i \geq 1$, the problem
$\Sigma_i\sat$, which is complete for $\Sigma_i^p$, is defined as the set of quantified boolean formulas of the
form
$(\exists u_1)\cdots(Q_iu_i)[\psi(u_1, \ldots, u_i)]$ that are true, 
where $\psi$ is a boolean formula, each of $u_1, u_2, \ldots, u_i$ denotes a list
of a boolean variables, and $Q_i$ is $\exists$ if $i$ is odd and $\forall$
if $i$ is even. The problem $\Pi_i\sat$, which is complete for $\Pi_i^p$, is defined
analogously, except that $Q_i$ is $\forall$ when $i$ is odd, and $\exists$ if $i$ is even.

$\pspace$ is the class of languages accepted by a Turing machine that runs in polynomial space. 
Formally, $\pspace = \bigcup_{k \in \naturalnumberpositive} \dspace[n^k]$.
Additionally, ``$\pspace$ embodies the power of polynomially bounded quantifiers"~\cite{hem-ogi:b:companion}. 
And so, $\qbf$, the set of all true quantified boolean formulas, is a canonical $\pspace$-complete 
problem. 
It's easy to see that $\qbf = \bigcup_{i \in \naturalnumber} \Sigma_i\sat \cup \Pi_i\sat$, and thus
$\ph \subseteq \pspace$.

Skolemization is the process of removing existentially quantified variables 
from a quantified boolean formula, and replacing them with Skolem constants or Skolem functions, 
which are boolean functions that are only over the universally quantified variables that appear 
before the existentially quantified variable being removed in the quantifier order. Interested 
readers may consult the textbook by Chang and Lee~\cite{cha-lee:b:symbolic-logic} for more details.

\section{On Theorem~1}\label{s:theorem-one}
We will now focus on Theorem~1 of Sopin's paper. The theorem has been reformulated for the sake of clarity.  
\begin{theorem}[\cite{sop:t:phpspace}]
     For an arbitrary $n \in \naturalnumberpositive$,
     let $\Phi = (\Omega_1x_1)(\Omega_2x_2)\cdots(\Omega_nx_n)[\phi(x_1,\ldots,x_n)]$ be an arbitrary boolean formula where 
     $(\Omega_1, \ldots, \Omega_n) \in \{\exists, \forall\}^n$.
     Let $I = \{i \condition i \leq n \land \Omega_i = \exists\}$.
     Then, $\Phi$
     is a true quantified boolean formula if and only 
     for every $i \in I$, there
     is a boolean function $y_i$ over the variables in
     $\{x_j \condition j < i \land \Omega_j = \forall\}$
     such that the quantified boolean formula that results from substituting 
     each $x_i$
     with $y_i$ is a tautology.
\end{theorem}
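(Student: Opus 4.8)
The plan is to prove the biconditional by induction on $n$, using the standard reading of a quantified boolean formula as a two-player game in which the variables $x_1, \ldots, x_n$ are assigned in order, the existential player choosing $x_i$ for $i \in I$ and the universal player choosing the rest, with the existential player winning a play exactly when the final total assignment satisfies $\phi$; then $\Phi$ is true iff the existential player has a winning strategy, and the Skolem functions $y_i$ are exactly a way of recording such a strategy. Before starting I would pin down the (minor) ambiguity in the statement: substituting each $x_i$ with $i \in I$ by the term $y_i$ and deleting the now-vacuous quantifiers $(\exists x_i)$ produces the quantified boolean formula $(\forall x_{k_1})\cdots(\forall x_{k_m})[\phi']$, where $k_1 < \cdots < k_m$ list the indices outside $I$ and $\phi'$ is $\phi$ after the substitution, and ``$\phi'$ is a tautology'' is read as: this formula is true, i.e., $\phi'$ holds under every assignment to the universally quantified variables.

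For the backward direction it is cleanest to argue directly rather than inductively. Suppose functions $(y_i)_{i\in I}$, with $y_i$ over $\{x_j : j<i \wedge \Omega_j = \forall\}$, make $\phi'$ a tautology. Have the existential player, when it is time to assign $x_i$, set $x_i := y_i$ evaluated at the values already chosen for the universal variables $x_j$ with $j<i$ (all of which precede $x_i$, hence are already assigned). Then for every play, i.e., every assignment to the universal variables, the final assignment sends each $x_i$ ($i\in I$) to the value of the term $y_i$, so $\phi$ evaluates exactly as $\phi'$ does, namely to \true. Thus the existential player wins every play and $\Phi$ is true.

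For the forward direction I would run the induction on $n$, stripping the outermost quantifier; write $\Phi = (\Omega_1 x_1)\Phi'$ with $\Phi' = (\Omega_2 x_2)\cdots(\Omega_n x_n)[\phi]$. If $\Omega_1 = \forall$, then $1\notin I$ and $\Phi$ is true iff both $\Phi'|_{x_1 \leftarrow 0}$ and $\Phi'|_{x_1 \leftarrow 1}$ are true; the induction hypothesis supplies, for each bit $b$, Skolem functions $y_i^{(b)}$ over $\{x_j : 2 \le j < i \wedge \Omega_j = \forall\}$, and one defines $y_i$ to branch on $x_1$ (take $y_i^{(0)}$ when $x_1=0$, else $y_i^{(1)}$), which is legitimately a function over $\{x_j : j < i \wedge \Omega_j = \forall\}$ precisely because $x_1$ is now a universal variable. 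If $\Omega_1 = \exists$, then $1 \in I$, the set $\{x_j : j < 1 \wedge \Omega_j = \forall\}$ is empty so $y_1$ must be a constant $b \in \{0,1\}$, and $\Phi$ is true iff $\Phi'|_{x_1 \leftarrow b}$ is true for some $b$; the induction hypothesis applies to each restriction, and since $x_1$ is not universal the permitted argument sets $\{x_j : j < i \wedge \Omega_j = \forall\}$ for $i \ge 2$ are unchanged, so one simply adjoins $y_1 := b$. The reverse implications inside each case are equally routine (restrict a given $y_i$ to $x_1 \leftarrow b$), and the base case $n \le 1$ is immediate.

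The routine parts are genuinely routine, so I expect the only real care to go into the bookkeeping of function domains: checking in the $\Omega_1 = \forall$ step that branching on $x_1$ yields a function with exactly the allowed arguments, checking in the $\Omega_1 = \exists$ step that the arguments for the inner functions do not change, and handling the degenerate cases where $I = \emptyset$ (the biconditional just says $\phi$ itself is a tautology over $x_1, \ldots, x_n$) or where a Skolem function has no arguments and so is a boolean constant. Equivalently, if one prefers the direct game-theoretic proof of the forward direction, the one subtle point is that an arbitrary winning strategy for $x_i$ may depend on all previously revealed variables, including the earlier existential ones; one removes that dependence by substituting, in order of increasing index, the already-constructed functions for those earlier existential variables, obtaining a $y_i$ that depends only on the earlier universal variables.
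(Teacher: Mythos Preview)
Your argument is correct; the induction on $n$ by stripping the leading quantifier, together with the bookkeeping on the argument sets of the $y_i$, is exactly the standard Skolemization proof, and your treatment of the two quantifier cases and of the degenerate cases is careful and complete.

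Note, however, that this paper is a \emph{critique} and does not supply its own proof of Theorem~1. It simply remarks that the statement is the well-known fact that Skolemization preserves satisfiability (citing Chang--Lee and Loveland) and then analyzes Sopin's purported proof. As the paper describes it, Sopin's approach has the same outer skeleton as yours---pull off the leading quantified variable and branch on its two values---but is presented as an algorithm with no stated base case, no explicit recursive call, and no visible connection between the algorithm and the construction of the Skolem functions $y_i$; the accompanying examples merely exhibit what a Skolemized formula looks like. In effect, your proposal is a properly executed version of the same recursion Sopin gestures at: you make the inductive hypothesis precise, you actually build the $y_i$ in each case (branching on $x_1$ when $\Omega_1=\forall$, taking $y_1$ constant when $\Omega_1=\exists$), and you verify that the resulting functions have the right domains. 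That is exactly what the paper says is missing from Sopin's argument.
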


To the best of our understanding, the theorem is essentially proposing that Skolemization preserves 
satisfiability, since the process described by the theorem resembles that of Skolemization. 
However, it is well-known that Skolemization does in fact preserve satisfiability (for reference, see textbooks by Chang and Lee~\cite{cha-lee:b:symbolic-logic} and Loveland~\cite{lov:b:automated-theorem-proving}). 
In order to leverage this technique to decide \qbf, it follows that one would at the very least need a polynomial upper bound on the space complexity of Skolemization. 
Unfortunately, no such result is currently known. We elaborate this point further in Section~\ref{s:skolem}.
Additionally, we notice several issues with the given proof of Theorem~1 and discuss them in the 
rest of this section.

In their proof of Theorem~1, the author introduces a recursive algorithm 
to test for membership in $\qbf$.
It works by removing quantifiers sequentially and producing, based on the removed quantifier, a new and logically equivalent formula. After that, the author adds a note on how ``a [boolean] function determines the truth table''~\cite{sop:t:phpspace} of its variables. The proof then follows with two examples to help illustrate its main argument.

First, we comment on the structure of the proof. It is unclear how each part of the proof relates to the other and to the main argument. Therefore, throughout our analysis, we will attempt to understand the role of each part of the proof. 

Sopin presents a recursive algorithm to decide $\qbf$ and claims that the proof of
Theorem~1 follows from it, however this algorithm is not well-defined. The recursive step is the only defined aspect of the algorithm: It involves pulling off the first quantified variable and checking both values for that variable. Depending on the quantifier, a new equation is formed using these two equations with both possible values of first variable. In the definition of this algorithm, there is neither a base case nor a recursive call. A base case is needed in order to show when the algorithm terminates and a recursive call is needed in order to show where the recursive step is applied. So it is ambiguous as to whether the algorithm presented is indeed recursive, or if only the first quantifier is to be pulled off. It appears to us that is a common algorithm to decide
$\qbf$, but it is unclear how the proof of Theorem~1 follows from it.

We would now like to turn our attention to a statement made at the end of the algorithm's 
description:~``Notice that a [boolean] function determines the truth table (one-to-one 
correspondence)"~\cite{sop:t:phpspace}. The term ``truth table" in this context is ambiguous. If 
the intention of the paper is indeed to use truth-tables to represent Skolem functions, as in 
Example~1 of Theorem~1 (discussed later), then this is a matter of concern, as the size of a 
truth-table is exponential in the number of its variables.
On the other hand, it's possible that the quoted statement is simply a fact of the relationship between boolean functions and truth-tables. Still, the paper fails to show the potential space complexity of this transformation, which is crucial when dealing with $\pspace$-complete problems. 

At the end of the proof of Theorem~1, two examples are presented in order to add clarity to the 
theorem. 
Example~1 provides an explicit description of a Skolem function, where existentially quantified variables can be rewritten as functions of the variables that come before it in the formula. At the end of the example, it is mentioned that the Skolem function ``is indeed the truth table, where values of $x_1, \ldots, x_{k-1}$ determine the value of $x_k$"~\cite{sop:t:phpspace}. As mentioned before, the use of truth-table in this context would lead to an exponential size blow-up. Example~2 is just a restatement of the theorem using an explicit quantified boolean formula. Example~2 states that ``$\forall x_1 \exists z_1 \forall x_2 \exists z_2 \forall x_3 \exists z_3$ is a true [quantified boolean formula] if and only if there exist such boolean functions $y_1 \colon \{0, 1\} \rightarrow \{0, 1\}, \: y_2 \colon \{0, 1\}^2 \rightarrow \{0, 1\}, \: y_3 \colon \{0, 1\}^3 \rightarrow \{0, 1\}$ that $\phi (x_1, y_1 (x_1), x_2, y_2 (x_1, x_2), x_3, y_3 (x_1, x_2, x_3))$ is tautology"~\cite{sop:t:phpspace}. This example just shows how a formula looks after Skolemization has occurred. The formula must be a tautology in order to be a true quantified boolean formula because only universally quantified variables are left. 
The  
replacement of existentially quantified variables with boolean functions over the correct variables
appears to have been performed correctly but there is neither a 
proof that these functions are Skolem functions 
nor a direct connection to the proof of Theorem~1.

Thus while the statement is true, the proof given does not support the theorem. 
It's unclear, based on our observations about Skolemization not being known to be computable using polynomial space, how this theorem will become useful in the rest of Sopin's paper (and indeed, we discuss in the next section that we see no such connection).

\section{On Theorem~2}\label{s:theorem-two}

In this section, we will focus our attention to the following theorem of Sopin's paper.

\begin{theorem}[\cite{sop:t:phpspace}]
$\Pi_4^p = \pspace$.
\end{theorem}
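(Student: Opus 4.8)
The plan is to establish the one nontrivial inclusion, $\pspace \subseteq \Pi_4^p$, since $\Pi_4^p \subseteq \ph \subseteq \pspace$ has already been noted above. Because $\Pi_4^p$ is closed under polynomial-time many-one reductions and $\qbf$ is $\pspace$-complete, it suffices to place $\qbf$ in $\Pi_4^p$. So, given a quantified boolean formula $\Phi = (\Omega_1 x_1)\cdots(\Omega_n x_n)[\phi(x_1,\ldots,x_n)]$ with $I = \{i \condition \Omega_i = \exists\}$, I would invoke Theorem~1 to rewrite ``$\Phi \in \qbf$'' as: there exist boolean functions $(y_i)_{i \in I}$, each $y_i$ over the universally quantified variables preceding $x_i$, such that the formula obtained by substituting each $x_i$ by $y_i$ and deleting the now-vacuous existential quantifiers is a tautology. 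The outermost quantifier block would guess an encoding of the tuple $(y_i)_{i \in I}$; the tautology test is itself a universal quantification over all assignments to the remaining (universal) variables followed by a polynomial-time evaluation of $\phi$ with the guessed values substituted.

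Read at face value this is a $\Sigma_2^p$-style sentence — one existential block (the Skolem functions), one universal block (the assignment), then a polynomial-time predicate — so in fact one would be claiming the much stronger statement $\qbf \in \Sigma_2^p$. To land merely at the weaker target $\qbf \in \Pi_4^p$, I would expect the two extra alternations to be spent either verifying the internal consistency of a compact presentation of each $y_i$ (e.g., checking that a claimed circuit or truth-table entry really evaluates as asserted before it is used inside the predicate), or on an iterated application of the Skolemization step of Theorem~1, peeling quantifiers in stages. Either way, the bookkeeping of quantifier alternations would have to be arranged so that exactly four polynomially bounded blocks suffice.

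The main obstacle — and the step I expect to fail — is bounding the size of the Skolem functions so that the outermost block is a \emph{polynomially} bounded quantifier in the sense of the definition of $\Sigma_i^p$. A truth-table presentation of $y_i$ has size exponential in the number of universal variables preceding $x_i$, so it cannot serve as a polynomial-length witness; a circuit or formula presentation would only work if \emph{every} true QBF admitted Skolem functions of polynomial size, which is not known and would itself be a spectacular result (it would immediately give $\qbf \in \Sigma_2^p$ and hence $\ph = \pspace$). Theorem~1, even granted in full, asserts only the \emph{existence} of the Skolem functions and says nothing about their descriptive or space complexity, so the entire argument would rest on a succinctness claim that — as we argue in Sections~\ref{s:theorem-one} and~\ref{s:skolem} — is never supplied. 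Without such a bound, there is no way to compress the unbounded quantifier alternation of $\qbf$ into four polynomially bounded blocks, and the proof cannot go through.
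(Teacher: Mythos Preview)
Your diagnosis of the Skolemization obstacle is sound and matches what the paper says in Section~\ref{s:skolem}: guessing Skolem functions as polynomial-length witnesses would already give $\qbf\in\Sigma_2^p$, and no size bound is ever supplied. However, this is not the argument Sopin actually offers for Theorem~2, and so your proposal does not engage with the proof the paper is critiquing. Sopin does \emph{not} attempt to quantify over encodings of the $y_i$ at all. Instead, from $\Phi$ he constructs an explicit $\Pi_4$ formula
\[
\Phi' = (\forall x_1,\ldots,x_n)(\exists y_1,\ldots,y_n)\Bigl[\phi(x_1,y_1,\ldots,x_n,y_n)\ \land\ \bigwedge_{k=2}^{n}(\forall \hat x_k,\ldots,\hat x_n)(\exists z_k,\ldots,z_n)[\phi(\ldots)]\Bigr],
\]
and argues by induction on the number of variables that $\Phi\equiv\Phi'$. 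The inductive step is a quantifier-swapping claim: $(\forall x)(\exists y)[\psi]\equiv(\exists y)(\forall x)[\psi]$ whenever $\psi$ is not XOR or its negation, and the extra conjuncts in $\Phi'$ are supposed to certify, in polynomial time, that no two-variable residual of $\phi$ is of XOR type.

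The paper's critique therefore targets two points your proposal does not anticipate. First, Theorem~1 is invoked only to justify ``peeling off'' the leading quantifier in the inductive step, not to produce Skolem witnesses; the paper notes this connection is never made precise. Second, and decisively, Sopin claims only $n^2$ two-variable residuals need to be checked for XOR-ness, whereas fixing the remaining $2n-2$ variables to concrete values yields $n^2\cdot 2^{2n-2}$ residuals; the paper identifies this exponential undercount as the fatal gap. Your Skolem-size objection is correct in spirit and is the content of Proposition~\ref{prop:collapse}, but it is orthogonal to the actual mechanics of Sopin's Theorem~2 argument.
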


As one would expect, the purported proof takes an arbitrary quantified boolean formula $\Phi$
and, from it, constructs a quantified boolean formula $\Phi'$ (which has a specific syntactic form 
that we specify later in this section) in polynomial time, such that
$\Phi \in \qbf \iff \Phi' \in \Pi_4\sat$.
Let us preface that the proof presented in the paper is often unclear about what it
means and makes several logical leaps, and so we approach the arguments in the proof using our best understanding of what the author could have meant.

Our first comment is about clarity and touches on the form of the quantified boolean formulas. 
The author assumes that quantified boolean formulas are of the form (which we will often refer to as the ``standard'' form in this critique)
$$(\forall x_1)(\exists y_1)\cdots(\forall x_n)(\exists y_n)[\phi(x_1, y_1, \ldots, x_n, y_n)],$$
for some $n \in \naturalnumberpositive$,
which at a glance seems incorrect. 
For example, the formula $(\exists x, y)(\forall z)[(x\lor y\lor z)]$ is certainly in $\qbf$,
but is not included in the paper's treatment. Indeed, the general form given seems to miss formulas with an
odd number of variables, formulas that start with an existential quantifier, and formulas that have multiple variables
bound to the same quantifier. However, what the paper does not make clear, is that all such formulas can be converted to the ``standard'' form in polynomial-time. 
The key to putting arbitrary quantified boolean formulas
into ``standard'' form is the use of dummy variables.\footnote{
For our purposes, a dummy variable is one that is quantified over, but does not appear in the boolean formula. That is certainly legal. For example, we can say that the formula $(\forall x)(\exists y)(\forall z)[(x \lor y)]$ is over the set of variables $\{x, y, z\}$. In this case, $z$ is a dummy variable.}
Our example,
$$(\exists x, y)(\forall z)[(x\lor y\lor z)],$$
introduced earlier in this paragraph exhibits all the issues that seem to exist with the ``standard'' form. 
And so, we shall present, in an informal manner (since the issue is rather simple and does
not warrant much more than an example),
how to convert the above formula to ``standard'' form.
Let us first make the first quantifier be a universal quantifier by introducing the
dummy variable $\alpha$. 
This yields the formula
$$(\forall \alpha)(\exists x, y)(\forall z)[(x\lor y\lor z)].$$
Next, we separate the variables $x$ and $y$ so that each quantifier is only bound to one variable by introducing the dummy variable $\beta$. This yields the formula
$$(\forall \alpha)(\exists x)(\forall \beta)(\exists y)(\forall z)[(x\lor y\lor z)].$$
To finish, since we need an even number of variables,  we introduce the dummy variable
$\gamma$ and obtain
$$(\forall \alpha)(\exists x)(\forall \beta)(\exists y)(\forall z)(\exists \gamma)[(x\lor y\lor z)].$$
It is not hard to see that if the original formula has $n$ variables, then the
new formula will have at most $2n+2$ variables.

We will now focus on the correctness of 
the proof of Theorem~2.
For the rest of this section, we will fix, for some $n \in \naturalnumberpositive$
and some boolean formula (with no quantifiers) $\phi$ that is over $2n$ variables,
the following 
quantified boolean formula 
$$\Phi = (\forall x_1)(\exists y_1)\cdots(\forall x_n)(\exists y_n)\allowbreak[\phi(x_1, y_1, \ldots, x_n, y_n)].$$

The paper seeks to 
construct, from $\Phi$, 
the following (purportedly logically equivalent) formula (which is not in ``standard'' form)
\begin{align*}
\Phi' = &(\forall x_1, \ldots, x_n)(\exists y_1, \ldots, y_n)[
\phi(x_1, y_1, \ldots, x_n, y_n) \land\\
&(\forall \hat{x}_n)(\exists z_n)[\phi(x_1, y_1, \ldots, x_{n-1}, y_{n-1}, \hat{x}_n, z_n)]
\land\\
&(\forall \hat{x}_{n-1}, \hat{x}_n)(\exists z_{n-1}, z_n)
[\phi(x_1, y_1, \ldots, x_{n-2}, y_{n-2}, \hat{x}_{n-1}, z_{n-1}, \hat{x}_n, z_n)] \land
\cdots \land \\
&(\forall \hat{x}_2, \ldots, \hat{x}_n)(\exists z_2, \ldots, z_n)
[\phi(x_1, y_1, \hat{x}_2, z_2, \ldots, \hat{x}_n, z_n)]
],
\end{align*}
such that $\Phi \in \qbf \iff \Phi' \in \Pi_4\sat$.

The attempted proof is by induction on the number of variables in the formula.
We will not repeat the entire argument presented there, and we urge interested
readers to consult the original paper directly.
The gist of the purported inductive proof is as follows: We can swap the positions of quantifiers inside the original formula, and then we can detect in polynomial-time whether the formula's truth value has been affected by the changes.
We note, off the bat, two major issues with this approach: 
(1)~it does not leverage the implications of the statement of Theorem~1, and
(2)~the induction does not actually prove the logical equivalence.

Let us address (1)~first. The only mention made to Theorem~1 in the purported proof of
Theorem~2 is in the case where the number of variables, $m$, is greater or equal to 3.
The paper states that by ``taking off the first quantifier and checking 
both possible values for the first variable in [the] way we did in Theorem~1,
we come to the $m-1$ case''~\cite{sop:t:phpspace}.
It is worth reiterating that the attempted proof of 
Theorem~1 simply presents a version of the (well-known) recursive algorithm to decide $\qbf$ in polynomial space. Thus while this approach of ``eating'' variables one at a time
may help decide if the formula is true, it potentially uses an exponential amount of 
(nondeterministic) time,
and there is no clear passage in the paper that clarifies the use of that algorithm.

Let us now address~(2). In the purported inductive proof, the paper claims that for any
quantified boolean formula $\psi$ over two variables, $x$ and $y$, it holds that
$(\forall x)(\exists y)[\psi(x, y)] \equiv (\exists y)(\forall x)[\psi(x, y)]$ if and only if $\psi$
is neither the XOR function nor the negation of the XOR function, which is true. And thus, the argument in the paper
states that as long as there is no way for $\psi$ to be the XOR function (or its negation) when the values of all but two variables, with one being universally quantified over and the other being
existentially quantified over,
are fixed, then the induction holds. 
The 
additional clauses in
$\Phi'$ are meant to play a role in supporting this argument. However, the paper not only fails to explain how these additional clauses work and why they work, but it also seems to be missing cases.
Indeed, the paper states that the above check can be done in polynomial time since 
``there are [only] $n^2$ such formulas''~\cite{sop:t:phpspace}. 
We believe this was derived by selecting, from $\Phi$, one of the $n$ variables that are universally 
quantified over and one of the $n$ variables that are existential quantifier over. However,
missing from the argument is the fact that each of the remaining $2n-2$ variables can have one 
of two values (0 or 1), thus creating $n^22^{2n-2}$ possible formulas to check.
(It's worth noting that the above check can easily be done in
\emph{nondeterministic} polynomial time. However, the paper only states
``polynomial time,'' which, as is standard, implies ``deterministic polynomial time.'')
We thus conclude that the induction presented in that proof is incorrect.

We mention briefly in passing that there are other minor errors, but we bear them no attention
as they do not carry as much importance as the ones we pointed out. Additionally, in the final
parts of that proof, the paper mentions the formula's algebraic normal form as being crucial
to the argument and provides an example, but it is not clear why
the algebraic normal form is crucial. 
Thus we are not certain as to whether the author was hoping to achieve something different than what is being conveyed through the technical report.

\section{A Note on Skolemization}\label{s:skolem}

We add this section with the hopes that the use of Skolemization can be better explained. Our expectation is that the size of the Skolem functions must have a role to play in the proof.
Indeed, if the size of such functions is not polynomially-bounded, then it is unclear how any
skolemized formula can even be computed in polynomial space
(and be a useful approach in this context). On the other hand, if there is a polynomial
upper bound on the size of Skolem  functions, then one can show something shocking.

\begin{proposition}\label{prop:collapse}
If there is a polynomial $p : \naturalnumber \to \naturalnumberpositive$ such that, for each quantified boolean formula $\Phi$, the Skolemization of $\Phi$ produces no Skolem function of size greater
than $p(|\Phi|)$, then $\Sigma_2^p = \pspace$.
\end{proposition}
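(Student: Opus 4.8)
The plan is to show that a polynomial size bound on Skolem functions lets us decide $\qbf$ in $\Sigma_2^p$, which immediately gives $\Sigma_2^p = \pspace$ since $\qbf$ is $\pspace$-complete and $\Sigma_2^p \subseteq \pspace$ is already known. First I would fix an arbitrary quantified boolean formula $\Phi$ and, using the conversion to ``standard'' form described in Section~\ref{s:theorem-two} (which only incurs a polynomial blow-up in the number of variables), write it as $(\forall x_1)(\exists y_1)\cdots(\forall x_n)(\exists y_n)[\phi]$; without loss of generality I may assume $\Phi$ is already in this shape. By the Skolemization theorem (Theorem~1 in the excerpt), $\Phi \in \qbf$ if and only if there exist boolean functions $y_1, \ldots, y_n$, with each $y_i$ depending only on $x_1, \ldots, x_i$, such that $\phi(x_1, y_1(x_1), \ldots, x_n, y_n(x_1,\ldots,x_n))$ is a tautology.

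The key move is to encode these Skolem functions. By hypothesis there is a polynomial $p$ bounding the size of each Skolem function in the Skolemization of $\Phi$, so each $y_i$ has a representation (say, as a boolean circuit or formula) of size at most $p(|\Phi|)$. The $\Sigma_2^p$ algorithm then works as follows: existentially guess, in polynomial time, a string encoding the collection of small representations of $y_1, \ldots, y_n$; then universally quantify over all assignments to $x_1, \ldots, x_n$; then check, in deterministic polynomial time, that evaluating each $y_i$ on $x_1, \ldots, x_i$ and plugging the results into $\phi$ yields \true. Evaluating a polynomial-size circuit or formula on a given input is polynomial-time, and there are only $n$ of them plus one evaluation of $\phi$, so the verification predicate is polynomial-time. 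This is exactly the $\exists\forall$ pattern defining $\Sigma_2^p$, so $\qbf \in \Sigma_2^p$, and hence $\pspace \subseteq \Sigma_2^p \subseteq \pspace$.

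The main obstacle — and the only place where real care is needed — is making sure the hypothesis is actually strong enough to be used this way: ``size of the Skolem function'' must mean the size of some polynomial-time-evaluable representation (circuit or formula), not, say, the size of its truth table, which would always be exponential and make the hypothesis vacuous. I would state this reading explicitly, noting it is the only interpretation under which the hypothesis is not trivially false, and observe that under that reading the existential guess in the $\Sigma_2^p$ machine is genuinely polynomially bounded and the per-input evaluation is genuinely polynomial-time. A secondary, entirely routine point is checking that the standard-form conversion and the dummy-variable padding do not disturb any of this: the conversion is polynomial-time, and Skolemizing the padded formula produces Skolem functions for the original existential variables (the dummy existential variables get trivial Skolem functions), so the bound $p$ applies to the functions we care about. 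Everything else is bookkeeping.
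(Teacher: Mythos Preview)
Your proposal is correct and follows essentially the same route as the paper: existentially guess polynomial-size representations of the Skolem functions, universally quantify over the remaining (universal) variables, and verify in deterministic polynomial time, yielding $\qbf \in \Sigma_2^p$ and hence $\Sigma_2^p = \pspace$. The paper's proof is terser---it works directly with the set $E$ of existentially quantified variables rather than first converting to ``standard'' form, and it does not pause to discuss which notion of ``size'' is intended---but the argument is the same.
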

\begin{proof}[Proof]
The $\subseteq$ relationship is well-known, so it suffices to show that under the assumptions of the above statement, 
$\pspace \subseteq \Sigma_2^p$. We will do so by showing that $\qbf \in \Sigma_2^p$.
Let $p$ be a polynomial as defined by the proposition's statement.
Without loss of generality, let us assume that our input is a quantified boolean formula (as we can easily detect that
in polynomial time if it is not and immediately reject).
Fix an arbitrary boolean formula $\Phi$ with $n$ variables as our input. Let $E$ denote the set of variables in $\Phi$ that are
existentially quantified.
Since Skolemization preserves satisfiability, it
follows that $\Phi\in\qbf \iff $ there are $\|E\|$ Skolem functions such that for each assignment to the variables not in $E$, the boolean formula that results from replacing each variable in $E$ with the corresponding Skolem function is true under the current assignment. Because each Skolem function has size bounded by $p(|\Phi|)$, substituting the Skolem functions into $\Phi$ and checking the truth value of the resulting formula, given a specific assignment, can be computed in polynomial time. 
Per our definition of $\Sigma_2^p$
in Section~\ref{s:prelims}, this implies that $\qbf \in \Sigma_2^p$.
\end{proof}

This strengthens a result by Akshay et al.~\cite{aks-cha-goe-kul-sha:c:boolean-synthesis} who under similar assumptions conclude that $\Sigma_2^p = \Pi_2^p = \ph$ (by leveraging the Karp--Lipton Theorem).

\section{Conclusion}\label{s:conclusion}

In this critique, we pointed out the errors in 
``$\ph=\pspace$''~\cite{sop:t:phpspace} and concluded that Sopin's paper fails to show
that $\pspace$ and the polynomial hierarchy coincide.
The primary issue is that Sopin's paper does not account for a potentially exponential amount of work needed to perform one of its checks that it claims
can be done in polynomial time. We additionally find that the use of Skolemization in the attempted proof is not clear. Indeed, it would be shocking 
if a machine could compute the Skolemization using only a polynomial amount of space as that would yield
ground-breaking results (as proved in our Proposition~\ref{prop:collapse}).

\paragraph{Acknowledgements}
We would like to thank
Erin Gibson, 
David E. Narv\'{a}ez,
and 
Lane A. Hemaspaandra
for their helpful comments on prior drafts.
The authors are responsible for any remaining errors.

\bibliographystyle{alpha}
\bibliography{main}

\newcommand{\etalchar}[1]{$^{#1}$}
\begin{thebibliography}{ACG{\etalchar{+}}18}

\bibitem[AB09]{aro-bar:b:complexity}
S.~Arora and B.~Barak.
\newblock {\em Complexity Theory: A Modern Approach}.
\newblock Cambridge University Press, 2009.

\bibitem[ACG{\etalchar{+}}18]{aks-cha-goe-kul-sha:c:boolean-synthesis}
S.~Akshay, S.~Chakraborty, S.~Goel, S.~Kulal, and S.~Shah.
\newblock What's hard about boolean functional synthesis?
\newblock In {\em Computer Aided Verification}, pages 251--269. Springer
  International Publishing, July 2018.

\bibitem[BGS75]{bak-gil-sol:j:rel}
T.~Baker, J.~Gill, and R.~Solovay.
\newblock Relativizations of the {P}=?{N}{P} question.
\newblock {\em SIAM Journal on Computing}, 4(4):431--442, 1975.

\bibitem[CL73]{cha-lee:b:symbolic-logic}
C.~Chang and R.~Lee.
\newblock {\em Symbolic Logic and Mechanical Theorem Proving}.
\newblock Academic Press, 1973.

\bibitem[Far20]{far:url-rm-font:ph-equals-pspace-consequences}
A.~Farago.
\newblock What would be the consequences of $\textrm{PH}=\textrm{PSPACE}$?
\newblock Theoretical Computer Science Stack Exchange,
  \url{https://cstheory.stackexchange.com/questions/21191}, 2020.

\bibitem[HO02]{hem-ogi:b:companion}
L.~Hemaspaandra and M.~Ogihara.
\newblock {\em The Complexity Theory Companion}.
\newblock Springer-Verlag, 2002.

\bibitem[Lov78]{lov:b:automated-theorem-proving}
D.~Loveland.
\newblock {\em Automated Theorem Proving: A Logical Basis}.
\newblock North-Holland, 1978.

\bibitem[Sip13]{sip:b:introduction-third-edition}
M.~Sipser.
\newblock {\em Introduction to the Theory of Computation}.
\newblock Cengage Learning, 3rd edition, 2013.

\bibitem[Sop14]{sop:t:phpspace}
V.~Sopin.
\newblock {PH} = {PSPACE}.
\newblock Technical Report arXiv:1411.0628v20~[cs.CC], Computing Research
  Repository, \mbox{arXiv.org/corr/}, November 2014.
\newblock Revised November 2022.

\bibitem[Yao85]{yao:c:separating}
A.~Yao.
\newblock Separating the polynomial-time hierarchy by oracles.
\newblock In {\em Proceedings of the 26th IEEE Symposium on Foundations of
  Computer Science}, pages 1--10. IEEE Computer Society Press, October 1985.

\end{thebibliography}

\end{document}